\let\old@ps@headings\ps@headings
\let\old@ps@IEEEtitlepagestyle\ps@IEEEtitlepagestyle
\def\psccfooter#1{%
    \def\ps@headings{%
        \old@ps@headings%
        \def\@oddfoot{\strut\hfill#1\hfill\strut}%
        \def\@evenfoot{\strut\hfill#1\hfill\strut}%
    }%
    \def\ps@IEEEtitlepagestyle{%
        \old@ps@IEEEtitlepagestyle%
        \def\@oddfoot{\strut\hfill#1\hfill\strut}%
        \def\@evenfoot{\strut\hfill#1\hfill\strut}%
    }%
    \ps@headings%
}
        \parbox{\textwidth}{\hrulefill \\ \small{24th Power Systems Computation Conference} \hfill \begin{minipage}{0.2\textwidth}\centering \vspace*{4pt} \includegraphics[scale=0.06]{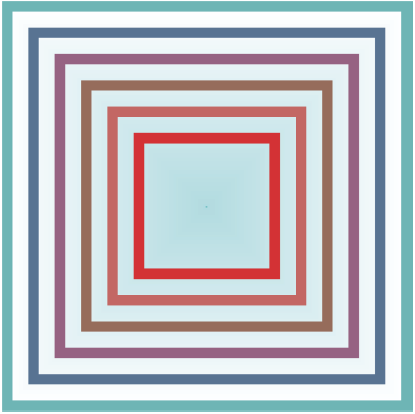}\\\small{PSCC 2026} \end{minipage} \hfill \small{Limassol, Cyprus --- June 8-12, 2026}}%
\newtheorem{thm}{Theorem}[section]
\newtheorem{prop}[thm]{Proposition}
\newtheorem{rem}{Remark}
\begin{document}
%
\title{Decentralized Small Gain and Phase Stability Conditions for Grid-Forming Converters: Limitations and Extensions}

\author{
\IEEEauthorblockN{Diego Cifelli, Adolfo Anta}
\IEEEauthorblockA{AIT Austrian Institute of Technology,
Vienna, Austria
}}


\maketitle

\begin{abstract}
The increasing share of converter-based resources in power systems calls for scalable methods to analyse stability without relying on exhaustive system-wide simulations. Decentralized small-gain and small-phase criteria have recently been proposed for this purpose, but their applicability to grid-forming converters is severely limited by the sectoriality assumption, which is not typically satisfied at low frequencies. This work revisits and extends mixed gain–phase conditions by introducing loop-shaping transformations that reformulate converter and network models in alternative coordinate frames. The proposed approach resolves intrinsic non-sectoriality at low frequencies and reduces conservativeness, thereby improving the applicability of decentralized stability certification. Analytical results are illustrated using an infinite bus system first and then extended to the IEEE 14-bus network, demonstrating the practicality and scalability of the method. These findings provide a pathway toward less conservative and more widely applicable decentralized stability certificates in power grids.
\end{abstract}

\begin{IEEEkeywords}
converter-dominated grids; small-phase criterion; decentralized stability conditions; grid-forming converters.
\end{IEEEkeywords}
\thanksto{This work has been
funded by the CETPartnership, the Clean Energy Transition Partnership under the 2023 joint call for research proposals, co-funded by the European Commission (GA N 101069750) and with the funding organizations AEI (National State Research Agency, Spain), PtJ (MWIKE) (Projekträger Jülich/Forschungszentrum Jülich GmbH, Germany), GSRI (General Secretariat for Research and Innovation, Greece), SEAI (Sustainable Energy Authority of Ireland), FFG (Austrian Research Promotion Agency) and RCN (The Research Council of Norway).}

\section{Introduction}
As the penetration of power electronic converters in modern power systems continues to rise, fundamental changes in grid dynamics are becoming increasingly evident. Interactions between converter-based resources and the existing network introduce complex behaviors that remain poorly understood. In practice, such converter-driven dynamics have already led to unexpected and undesirable phenomena~\cite{Cigre2024}, underscoring the need for systematic stability assessment. Traditionally, ensuring stability has relied on extensive simulation studies, which become computationally prohibitive and impractical when scaling to systems with hundreds of converters. Moreover, each time a new device is connected, the entire assessment must be repeated, further amplifying the challenge.

To overcome these limitations, recent research has proposed decentralized stability criteria~\cite{chenExtendedFrequencyDomainPassivity2025a,huangGainPhaseDecentralized2024,häberle2025decentralizedparametricstabilitycertificates} that enable the impact of a single converter to be assessed independently of others in the system. These conditions are designed such that, if each individual converter satisfies them, the overall system stability is guaranteed without requiring explicit analysis of converter-converter interactions. This paradigm shift offers a scalable and modular pathway to stability certification in converter-dominated grids, reducing reliance on exhaustive system-wide simulations. Moreover, these decentralized conditions yield valuable insights for converter control design to ensure system stability. 


A well-known decentralized criterion for ensuring stability requires that all system components exhibit passivity. This requirement is starting to appear in upcoming grid codes for grid-forming converters \cite{fingridGridCode}. Although this condition is naturally satisfied by the network, it does not hold for converters. At high frequencies, passivity can be enforced through suitably control strategies. 
At low frequencies, understood here in the synchronous reference (dq) frame, non-passivity is intrinsically associated with the constant-power characteristics of converters, a property that cannot be removed. 

While passivity provides a conservative stability criterion, subsequent research has introduced more general and less restrictive frameworks.  In \cite{chenExtendedFrequencyDomainPassivity2025a}, and extended in~\cite{chenUnifiedFlexibleFrequencyDomain2025a}, passivity indices allow frequency-wise compensation between passive and non-passive devices. Extended passivity indices with weighting matrices offer additional flexibility and reduced conservatism. However, these approaches focus on single converters and have not been extended to multi-converter systems.

Alternative decentralized conditions were proposed in~\cite{huangGainPhaseDecentralized2024} using more sophisticated methods, based on the novel definition of phase for a matrix to elaborate a decentralized mixed gain and phase criteria. However, the converter admittance needs to satisfy a sectoriality condition that is not fulfilled in practice
limiting its applicability. A similar approach appeared in \cite{woolcockMixedGainPhase2023} aimed at robust analysis.

To overcome the sectoriality assumption, \cite{baron2025decentralized} proposed an alternative method for phase computation based on scaled relative graphs (SRGs). While promising, the construction of such SRGs is computationally demanding. Moreover, computing SRGs with black-box models is still an open point.

Another relevant study \cite{deyPassivityBasedDecentralizedCriteria2023} formulates the system in a polar reference frame, in which grid-forming (GFM) converters are passive at low frequencies. Assuming time-scale separation, the authors propose a hybrid criterion: passivity is verified in the polar frame at low frequencies, while the standard rectangular admittance representation is used at high frequencies. However, the time-scale separation assumption may not hold in practice or may be difficult to verify. Decentralized parametric conditions on P-f and Q-V control were also proposed in \cite{häberle2025decentralizedparametricstabilitycertificates}. However, the scope is limited to predefined control structure and ignore lower level converter dynamics.

In this paper, focusing on the decentralized small gain/phase condition proposed in \cite{huangGainPhaseDecentralized2024}, we first identify some limitations of the method when applied to grid-forming converters, which exhibit high gain and lack sectoriality at low frequencies. We then analyze the origin of this non-sectoriality, showing that it arises from the constant-power characteristic of converters and is therefore unavoidable. Finally, inspired by \cite{deyPassivityBasedDecentralizedCriteria2023}, we extend the framework of \cite{huangGainPhaseDecentralized2024} by introducing a loop-shaping transformation, which generalizes the approach, provides less conservative conditions, and resolves the sectoriality issue.

The remainder of this paper is organized as follows: Section \ref{sec:gain_phase_crit} reviews decentralized gain-phase stability criteria, while Section \ref{sec:lim} addresses their limitations regarding GFM converters. Section \ref{sec:propo_approch} details our proposed extensions. The methodology is validated in Sections \ref{sec:ex_1} and \ref{sec:ex_2} for a converter connected to an infinite bus and the IEEE 14 Bus system, respectively. Finally, Section \ref{sec:conc} concludes the paper.

Data and code to reproduce the results are available at \cite{RepoZenodo}.

\section{Mixed Gain-Phase Stability Conditions}
\label{sec:gain_phase_crit}

For a matrix $A\in \mathbb{C}^{n\times n}$, the $n$ gains are defined as its singular values, with $\overline{\sigma}(A)$ and $\underline{\sigma}(A)$ being the largest and smallest gains, respectively. These quantities can be used to bound both the eigenvalues of $A$ and the magnitudes of the eigenvalues of the product of two matrices $A,B\in\mathbb{C}^{n\times n}$, as in~\eqref{eq:sv_bound}. Importantly, these bounds are independent of the explicit computation of the product $AB$.
\begin{equation}
    \underline{\sigma}(A)\underline{\sigma}(B) \leq |\lambda(AB)| \leq \overline{\sigma}(A)\overline{\sigma}(B) 
    \label{eq:sv_bound}
\end{equation}
It is desirable to establish analogous bounds for the phases of the eigenvalues. This problem is less straightforward, and only recently has a definition of matrix phase been proposed that enables such bounds \cite{wangPhasesComplexMatrix2020, chenPhaseAnalysisIEEE}.

The numerical range $W(A)$ of a complex matrix $A\in \mathbb{C}^{n\times n}$ is defined by \eqref{eq:num_range}. It is a convex subset of $\mathbb{C}$ that contains the eigenvalues of $A$.
\begin{equation}
    W(A)=\{ x^*Ax:\,x\in\mathbb{C}^n, ||x||=1\}
    \label{eq:num_range}
\end{equation}
A matrix is said to be sectorial if $W(A)$ lies entirely within an angular sector of the complex plane and $0 \notin W(A)$. For a sectorial matrix $A$, there exists an invertible matrix $T$ such that $A=T^*DT$, where $D$ is a unique diagonal matrix (up to permutation). 
The phases of $A$ are defined as the arguments of the diagonal entries of $D$, ordered so that $\overline{\phi}(A)$ and $\underline{\phi}(A)$ denote the maximum and minimum phases, with $\overline{\phi}(A)-\underline{\phi}(A) < \pi$. The phase interval $[\underline{\phi}(A),\, \overline{\phi}(A)]$ is shortly denoted by $\Phi(A)$. This definition can be extended to quasi-sectorial (i.e. $0 \in \partial W(A)$) and to semi-sectorial (i.e. $0 \in \partial W(A)$ and $\overline{\phi}(A)-\underline{\phi}(A) \leq \pi$) matrices \cite{chenPhaseTheoryMIMO2022}.
Geometrically, these phases correspond to the angular boundaries of the numerical range. In direct analogy with singular values, maximum and minimum phases can be used to bound the phases of the eigenvalues of a matrix product as in~\eqref{eq:phase_bound}.
\begin{equation}
    \underline{\phi}(A)+\underline{\phi}(B) \leq \angle{\lambda(AB)} \leq \overline{\phi}(A)+\overline{\phi}(B) 
    \label{eq:phase_bound}
\end{equation}
\subsection{Mixed Gain-Phase Stability Criterion}
\begin{figure}
    \centering
    \begin{subfigure}{0.45\columnwidth}
        \centering
        \includegraphics[width=\linewidth]{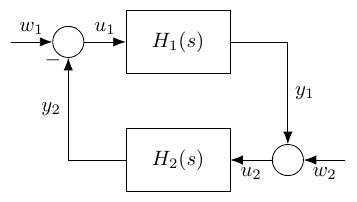}
        \caption{Generic Feedback System}
        \label{fig:fd_diagram}
    \end{subfigure}
    \begin{subfigure}{0.45\columnwidth}
        \centering
        \includegraphics[width=\linewidth]{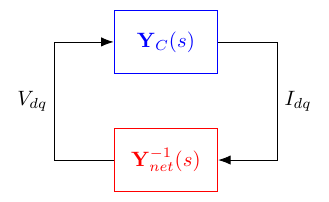}
        \caption{Admittance Feedback}
        \label{fig:fd_diagram_conv_net}
    \end{subfigure}
    \caption{Diagram of the feedback connection.}
    \label{fig:twoside}
\end{figure}
The concepts introduced above provide a framework for assessing the stability of linear time-invariant systems. Consider the feedback interconnection of $H_1(s)$ and $H_2(s)$ shown in Figure \ref{fig:fd_diagram}. 
The following theorem gives a sufficient condition for the stability of the resulting closed-loop system \cite{zhao2022smallgainmeetssmall}.
\begin{thm}[Mixed Small Gain-Phase Theorem]
    Let the open-loop systems $H_1, H_2$ be real, rational, stable, and proper transfer function matrices. Then, the closed loop system is stable if for each $\omega \in [0, \infty]$, either
    \begin{enumerate}
        \item The product of the maximum singular values satisfies:
        \begin{equation*}
            \overline{\sigma}(H_1(j\omega))\overline{\sigma}(H_2(j\omega))<1
        \end{equation*}
        \item 
        or $H_1(j\omega)$ is semi-sectorial and $H_2(j\omega)$ is sectorial, and
        \begin{equation*}
            \overline{\phi}(H_1(j\omega))+\overline{\phi}(H_2(j\omega))<\pi
        \end{equation*}
        \begin{equation*}
            \underline{\phi}(H_1(j\omega))+\underline{\phi}(H_2(j\omega))>-\pi
        \end{equation*}
    \end{enumerate}
    \label{thm:mixed_small_gain_phase}
\end{thm}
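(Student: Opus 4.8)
The plan is to reduce closed-loop stability to an encirclement count via the generalized (multivariable) Nyquist criterion, and then to rule out any encirclement of the critical point $-1$ by a homotopy argument. Since $H_1$ and $H_2$ are stable and proper, the loop gain $L(s) := H_1(s)H_2(s)$ is itself stable, hence has no poles in the closed right half-plane. The multivariable Nyquist criterion then states that the interconnection is stable if and only if the characteristic loci of $L(j\omega)$ — the eigenvalue loci $\{\lambda_i(L(j\omega))\}$ traced as $\omega$ runs over the Nyquist contour — produce zero net encirclements of $-1$. Properness guarantees the loci are bounded at $\omega=\infty$, and stability guarantees no poles on the imaginary axis, so the contour needs no indentations. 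The whole proof therefore amounts to showing that this encirclement count is zero.

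First I would establish that the loci stay away from $-1$ frequency by frequency. Fix $\omega$. If condition (1) holds, the singular-value bound \eqref{eq:sv_bound} gives $|\lambda(L(j\omega))| \leq \overline{\sigma}(H_1(j\omega))\,\overline{\sigma}(H_2(j\omega)) < 1$, so every eigenvalue lies strictly inside the unit disk and in particular differs from $-1$. If instead condition (2) holds, then $H_1(j\omega)$ is semi-sectorial and $H_2(j\omega)$ is sectorial, so the phase bound \eqref{eq:phase_bound} applies and yields
\[
-\pi < \underline{\phi}(H_1)+\underline{\phi}(H_2) \leq \angle{\lambda(L(j\omega))} \leq \overline{\phi}(H_1)+\overline{\phi}(H_2) < \pi ,
\]
placing every eigenvalue phase strictly in $(-\pi,\pi)$; hence no eigenvalue lies on the negative real axis, and again $\lambda(L(j\omega)) \neq -1$. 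Thus at every frequency — whichever of the two conditions is invoked — the characteristic loci avoid the point $-1$.

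Avoidance alone does not fix the winding number, so next I would deform $L$ to the zero map. Introduce the homotopy $L_\tau(s) := \tau L(s)$ for $\tau \in [0,1]$ and observe that $\lambda(L_\tau(j\omega)) = \tau\,\lambda(L(j\omega))$. Under condition (1) we get $|\lambda(L_\tau(j\omega))| \leq \tau\,|\lambda(L(j\omega))| < 1$, while under condition (2) the phase is invariant under the positive scalar $\tau$ and stays in $(-\pi,\pi)$ for $\tau>0$; at $\tau=0$ every eigenvalue sits at the origin. Consequently $-1$ is never an eigenvalue of $L_\tau(j\omega)$ for any $(\tau,\omega)\in[0,1]\times[0,\infty]$, so the characteristic loci never sweep across $-1$ during the deformation. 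At $\tau=0$ the loci collapse to the origin and trivially give zero encirclements; by homotopy invariance of the winding number about $-1$, the loci of $L_1=L$ also give zero encirclements. Invoking the Nyquist criterion of the first paragraph then completes the argument.

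I expect the main obstacle to be making the homotopy invariance fully rigorous at the level of the characteristic loci rather than of a single scalar Nyquist plot. The eigenvalue branches $\lambda_i(L_\tau(j\omega))$ can coalesce at branch points, so they are not globally smooth functions of $(\tau,\omega)$; one must argue that the union of loci forms a continuous family of closed curves in the punctured plane $\mathbb{C}\setminus\{-1\}$ and that its total winding number — equivalently the winding number of $\det(I+L_\tau)$ about the origin — is continuous in $\tau$ and hence constant. Care is also needed at the semi-sectorial boundary, where $0\in\partial W(H_1)$ may render $H_1(j\omega)$ singular and send an eigenvalue of $L$ to the origin; this is harmless for encirclement of $-1$ but must be checked so that the phase bound \eqref{eq:phase_bound} is applied only where it is valid. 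The gain and phase estimates themselves are then routine consequences of \eqref{eq:sv_bound} and \eqref{eq:phase_bound}.
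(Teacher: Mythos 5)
The paper does not actually prove this theorem; it is imported verbatim from \cite{zhao2022smallgainmeetssmall}, and your generalized-Nyquist-plus-homotopy argument --- showing that no eigenvalue of $\tau H_1(j\omega)H_2(j\omega)$ equals $-1$ for any $(\tau,\omega)\in[0,1]\times[0,\infty]$ (equivalently, no eigenvalue of $H_1H_2$ on $(-\infty,-1]$), hence $\det(I+L_\tau(j\omega))$ never vanishes and the winding number about the origin is constantly zero --- is essentially the standard proof given in that reference. The argument is correct, and the two caveats you flag (tracking the winding number of $\det(I+L_\tau)$ rather than of non-smooth individual eigenloci, and noting that a zero eigenvalue arising from the semi-sectorial $H_1$ is harmless since it cannot hit $-1$) are precisely the points that need care.
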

This theorem unifies the traditional small-gain condition and its phase counterpart into a single stability framework, enabling the use of either gain-based or phase-based bounds at each frequency, depending on which provides a less conservative estimate.

\subsection{Decentralized condition for power system stability}
\begin{figure}
    \centering
    \includegraphics[width=\linewidth]{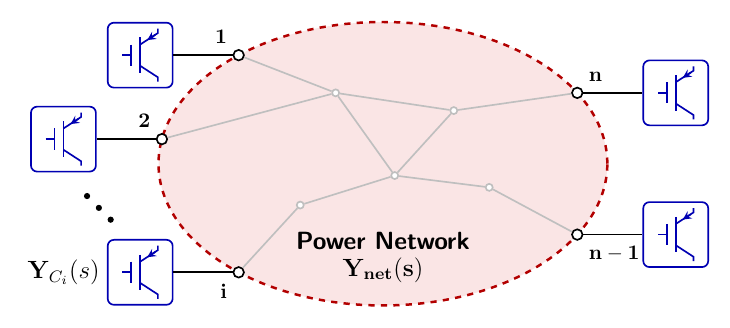}
    \caption{Illustration of a multi-converter power system}
    \label{fig:multi_converter_diagram}
\end{figure}

In the context of power systems, we are interested in assessing the stability of the interconnection of $n$ power electronic converters to a given network, illustrated in Figure \ref{fig:multi_converter_diagram}. 
Let 
\begin{equation}
    \mathbf{Y}_C(s)=\text{diag}(Y_{C_1}(s),Y_{C_2}(s),\cdots,Y_{C_n}(s))
    \label{eq:block_diagonal}
\end{equation}
denote the overall converter admittance matrix, where $Y_{C_i}(s)$ represents the admittance of the $i$-th converter. Let $\mathbf{Y}_{net}(s)$ denote the network admittance. In this work, $\mathbf{Y}_{net}(s)$ is assumed to be the Kron-reduced network, where all internal nodes that are not connected to any converter are eliminated. All admittances are expressed in a common global $dq$ reference frame. Analogous to Figure \ref{fig:fd_diagram},  the interconnection between the converters and the network can be represented as the feedback loop shown in Figure \ref{fig:fd_diagram_conv_net}.

Theorem \ref{thm:mixed_small_gain_phase} can be applied directly to this system to provide a stability condition. However, the block-diagonal structure of $\mathbf{Y}_C(s)$ allows a decentralized stability analysis, leading to the following result \cite{huangGainPhaseDecentralized2024}.

\begin{thm}[Decentralized Mixed Small Gain-Phase Theorem]
    The multi-converter system in Figure \ref{fig:fd_diagram_conv_net} is stable if the open-loop systems are stable, and for each $\omega \in [0,\infty]$, either
    \begin{enumerate}
        \item 
        the decentralized gain condition, i.e.,
        \begin{equation*}
            \max_i \overline{\sigma}(Y_{C_i}(j\omega))<\underline{\sigma}(\mathbf{Y}_{net}(j\omega))
        \end{equation*}
        holds, or
        \item 
        the decentralized phase condition, i.e
        every $Y_{C_i}(j\omega)$ is quasi-sectorial, and $\mathbf{Y}_{net}(j\omega)$ is sectorial, and
        \begin{equation*}
            \overline{\phi}(Y_{C_i}(j\omega))<\pi-\overline{\phi}(\mathbf{Y}_{net}^{-1}(j\omega)) \quad \forall i
        \end{equation*}
        \begin{equation*}
            \underline{\phi}(Y_{C_i}(j\omega))>-\pi-\underline{\phi}(\mathbf{Y}_{net}^{-1}(j\omega))\quad \forall i
        \end{equation*}
        \begin{equation*}
            \max_i\overline{\phi}(Y_{C_i}(j\omega)) -\min_i\underline{\phi}(Y_{C_i}(j\omega))<\pi
        \end{equation*}
    \end{enumerate}
    \label{thm:dec_mixed_small_gain_phase}
\end{thm}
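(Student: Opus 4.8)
The plan is to reduce the decentralized statement to a single application of the Mixed Small Gain-Phase Theorem (Theorem~\ref{thm:mixed_small_gain_phase}) by choosing the correct partition of the loop and then exploiting the block-diagonal structure~\eqref{eq:block_diagonal}. First I would match the admittance feedback of Figure~\ref{fig:fd_diagram_conv_net} with the generic interconnection of Figure~\ref{fig:fd_diagram}, setting $H_1(j\omega)=\mathbf{Y}_C(j\omega)$ and $H_2(j\omega)=\mathbf{Y}_{net}^{-1}(j\omega)$; this choice is forced by the fact that the target conditions are phrased in terms of $\mathbf{Y}_{net}^{-1}$. With this identification the two alternatives of Theorem~\ref{thm:mixed_small_gain_phase} become $\overline{\sigma}(\mathbf{Y}_C)\,\overline{\sigma}(\mathbf{Y}_{net}^{-1})<1$ on the gain side, and the paired inequalities on $\overline{\phi},\underline{\phi}$ of $\mathbf{Y}_C$ and $\mathbf{Y}_{net}^{-1}$ on the phase side. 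All that then remains is to rewrite the gains and phases of the centralized object $\mathbf{Y}_C$ in terms of the individual blocks $Y_{C_i}$.

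For the gain branch I would invoke two elementary facts. Since $\mathbf{Y}_C$ is block diagonal, its singular values are the union of those of the blocks, so $\overline{\sigma}(\mathbf{Y}_C(j\omega))=\max_i\overline{\sigma}(Y_{C_i}(j\omega))$; and since the singular values of an inverse are the reciprocals of the originals, $\overline{\sigma}(\mathbf{Y}_{net}^{-1}(j\omega))=1/\underline{\sigma}(\mathbf{Y}_{net}(j\omega))$. Substituting both into $\overline{\sigma}(\mathbf{Y}_C)\,\overline{\sigma}(\mathbf{Y}_{net}^{-1})<1$ and clearing the denominator gives $\max_i\overline{\sigma}(Y_{C_i}(j\omega))<\underline{\sigma}(\mathbf{Y}_{net}(j\omega))$, the decentralized gain condition.

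For the phase branch the key structural fact is that the numerical range of a block-diagonal matrix is the convex hull of the numerical ranges of its blocks, $W(\mathbf{Y}_C)=\mathrm{conv}\bigl(\bigcup_i W(Y_{C_i})\bigr)$. Because the extreme phases are the angular boundaries of $W(\cdot)$, taking this convex hull selects the largest and smallest angles over all blocks, i.e.\ $\overline{\phi}(\mathbf{Y}_C)=\max_i\overline{\phi}(Y_{C_i})$ and $\underline{\phi}(\mathbf{Y}_C)=\min_i\underline{\phi}(Y_{C_i})$. Using that $\mathbf{Y}_{net}$ sectorial implies $\mathbf{Y}_{net}^{-1}$ sectorial, I would insert these identities into the two phase inequalities $\overline{\phi}(\mathbf{Y}_C)+\overline{\phi}(\mathbf{Y}_{net}^{-1})<\pi$ and $\underline{\phi}(\mathbf{Y}_C)+\underline{\phi}(\mathbf{Y}_{net}^{-1})>-\pi$ of Theorem~\ref{thm:mixed_small_gain_phase} and rearrange to recover the decentralized phase condition verbatim.

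The main obstacle is the sectoriality bookkeeping that must precede any appeal to Theorem~\ref{thm:mixed_small_gain_phase}: the theorem requires $H_1=\mathbf{Y}_C$ to be semi-sectorial, whereas the hypotheses only grant quasi-sectoriality of each individual $Y_{C_i}$. The convex-hull description makes the passage from blocks to direct sum delicate, since $0$ remains on the boundary of $W(\mathbf{Y}_C)$ --- and the phase identities above remain valid --- only when all block numerical ranges lie in a common half-plane through the origin, equivalently when the union of per-converter phase intervals $\Phi(Y_{C_i})$ spans at most $\pi$. This alignment is not a consequence of the two displayed phase inequalities alone: subtracting them only bounds the phase spread of $\mathbf{Y}_C$ below $2\pi$. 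I would therefore discharge it using the quasi-sectoriality hypothesis together with the shared global $dq$ frame, and I expect this semi-sectoriality step to be the crux of the argument --- fittingly, it is precisely the assumption that the remainder of the paper shows to fail for grid-forming converters at low frequencies.
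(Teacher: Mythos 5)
The paper itself does not prove this theorem --- it is imported verbatim from \cite{huangGainPhaseDecentralized2024} --- so there is no in-paper proof to compare against. Your overall route (reduce to Theorem~\ref{thm:mixed_small_gain_phase} with $H_1=\mathbf{Y}_C$, $H_2=\mathbf{Y}_{net}^{-1}$, then exploit the block-diagonal structure) is the natural and intended one, and your bookkeeping on the gain branch ($\overline{\sigma}$ of a direct sum is the max over blocks; $\overline{\sigma}(\mathbf{Y}_{net}^{-1})=1/\underline{\sigma}(\mathbf{Y}_{net})$) and on the phase identities ($W(\mathbf{Y}_C)=\mathrm{conv}(\bigcup_i W(Y_{C_i}))$, $\Phi(\mathbf{Y}_{net}^{-1})=-\Phi(\mathbf{Y}_{net})$) is correct.

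The genuine gap is exactly the step you flag at the end but do not actually close: semi-sectoriality of the direct sum $\mathbf{Y}_C$. You correctly observe that the two displayed inequalities only bound the aggregate spread $\max_i\overline{\phi}(Y_{C_i})-\min_i\underline{\phi}(Y_{C_i})$ by $2\pi-\bigl(\overline{\phi}(\mathbf{Y}_{net}^{-1})-\underline{\phi}(\mathbf{Y}_{net}^{-1})\bigr)$, which exceeds $\pi$ whenever the network is strictly sectorial; two converters with individually quasi-sectorial but mutually misaligned phase intervals (e.g.\ $[0.6\pi,0.9\pi]$ and $[-0.9\pi,-0.6\pi]$ against a near-resistive network) satisfy every stated hypothesis while $0$ lies in the interior of $W(\mathbf{Y}_C)$, so Theorem~\ref{thm:mixed_small_gain_phase} simply does not apply. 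Your proposed discharge --- ``the quasi-sectoriality hypothesis together with the shared global $dq$ frame'' --- does not work: a common reference frame is a modeling convention and imposes no constraint on where each block's phase interval sits. To actually close the argument you need either (i) an additional hypothesis that all $\Phi(Y_{C_i}(j\omega))$ lie in a common sector of width at most $\pi$ (which is how the cited source effectively proceeds), or (ii) to bypass Theorem~\ref{thm:mixed_small_gain_phase} and argue directly on the generalized Nyquist locus: from $\mathbf{Y}_Cx=\lambda\mathbf{Y}_{net}x$ one gets $\lambda\,x^*\mathbf{Y}_{net}x=\sum_i x_i^*Y_{C_i}x_i$, and the conclusion $\angle\lambda\in(-\pi,\pi)$ follows only if the summands lie in a pointed convex cone --- which is again the same alignment condition. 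As written, your proof establishes the theorem only under that extra common-sector assumption; without it the statement is not derivable from Theorem~\ref{thm:mixed_small_gain_phase} by this route.
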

This decentralized formulation allows the stability of a large-scale multi-converter system to be certified by verifying local conditions on each converter model together with the network model, without requiring the explicit computation of the full interconnected system’s dynamics.
\section{Criteria Limitations for GFM converters}
\label{sec:lim}

In this section, we illustrate why the stability conditions of Theorem \ref{thm:dec_mixed_small_gain_phase} are difficult to be satisfied in practical power system applications.
A typical example arises with grid-forming (GFM) converters, which often exhibit high low-frequency gain.
To demonstrate this limitation, we consider a conventional GFM converter implementing virtual synchronous machine (VSM) power control (described in detail in Appendix \ref{apx:GFM_control}) connected to an infinite bus, see Figure \ref{fig:gfm_diag}, through a network with low short-circuit ratio (SCR), i.e., large grid impedance. The resulting small-gain and small-phase plots for the feedback loop are shown in Figure~\ref{fig:GFM_gain_phase_test}.

Three main issues arise from this example:
\begin{itemize}
    \item Violation of the small-gain condition: the gain of the converter significantly exceeds that of the grid. As a result, the small-gain condition is never satisfied.
    \item Violation of sectoriality: below $0.9$ Hz, the converter impedance is not sectorial, and therefore the small-phase criterion cannot be applied. 
    \item Conservativeness: even when the admittance is sectorial and the phase is well-defined, the phase intervals of the converter and the grid are very close to overlapping for frequencies below $50$ Hz, and eventually do overlap near $1$ Hz. This overlap is not necessarily indicative of an actual instability but is instead a consequence of the conservativeness of the small-phase criterion.
\end{itemize}
From this example, it is clear that the classical small-gain and small-phase criteria exhibit significant limitations in realistic converter-dominated scenarios. A detailed examination is now presented in the following subsection to identify the source of this non-sectoriality.
\begin{figure}[tbh]
    \centering
    \begin{subfigure}{\columnwidth}
        \centering
        \includegraphics[width=\linewidth]{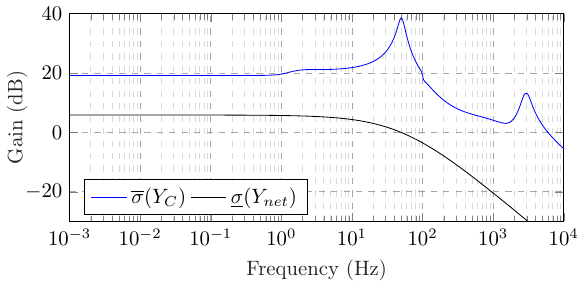}
    \end{subfigure}
    \begin{subfigure}{\columnwidth}
        \centering
        \includegraphics[width=\linewidth]{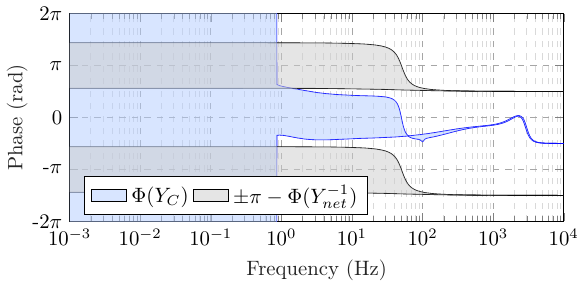}
    \end{subfigure}
    \caption{Mixed small gain–phase condition for a GFM converter. If a system is not sectorial, its phase is represented by the full interval $[-2\pi,\, 2\pi]$. The small-gain condition is violated if the converter gain exceeds the network gain. The small-phase condition is violated if the converter phase overlap the network phase shifted by $\pm\pi$.}
    \label{fig:GFM_gain_phase_test}
\end{figure}

\subsection{Non-Sectoriality of GFM converter}
As illustrated in the previous example, and also reported in other studies, converter admittances are often non-sectorial at low frequencies \cite{huangGainPhaseDecentralized2024,woolcockMixedGainPhase2023}. However, the origin of this phenomenon has not been thoroughly investigated, and it remains unclear whether it is a consequence of specific control architectures and parameter choices, or an intrinsic property of the device itself.

We adopt the modeling approach described in \cite{guImpedanceBasedWholeSystemModeling2021}. The small-signal admittance model of a generic converter is depicted in Figure~\ref{fig:frame_embedding}. The input to the model is the voltage at the point of common coupling, $V_{DQ}$, expressed in a local steady $DQ$ reference frame, while the output is the converter current $I_{DQ}$ in the same frame. The converter controller operates in a local swing $dq$ reference frame, and is conceptually divided into two parts: 
\begin{itemize}
    \item Synchronization controller $K_v(s)$, which generates the angle $\epsilon$ used to transform variables from the local steady to the swing frame.
    \item Remainder of the control system $Y_{dq}(s)$, which includes the current control, voltage control, virtual impedance, reactive power control, and other controllers not related to the synchronization.
\end{itemize}
 The closed-loop admittance in the local steady frame can then be written as~\eqref{eq:frame_embedding}. The vectors $V_0^e=[-V_{q,0},V_{d,0}]^T$ and $I_0^e=[-I_{q,0},I_{d,0}]^T$ represent the linearized rotation terms between the two frames, where $V_{d,0}$, $V_{q,0}$, and $I_{d,0}$, $I_{q,0}$ are the voltage and current operating points.
\begin{equation}
    Y_{DQ}(s)=(Y_{dq}(s)+I_0^eK_v(s))(I+V_0^eK_v(s))^{{-}1}
    \label{eq:frame_embedding}
\end{equation}
 \begin{figure}
     \centering
     \includegraphics[width=\linewidth]{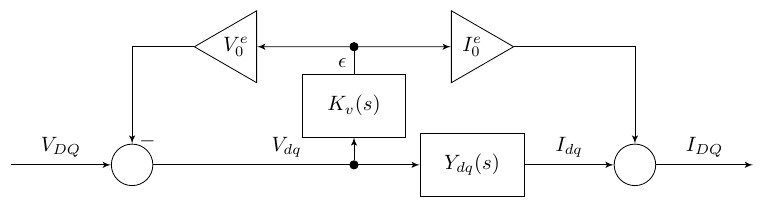}
     \caption{Small-signal model of a generic converter with synchronization frame embedding.}
     \label{fig:frame_embedding}
 \end{figure}
For a typical GFM converter with power synchronization, the angle $\epsilon$ is generated by a dynamic system of the form~\eqref{eq:freq_tf}, where $\omega$ denotes the frequency, $P$ is the active power, and $H_P(s)$ is the transfer function of the power-based synchronization loop. 
\begin{equation}
    \omega = H_P(s)P\,,\quad \epsilon=\frac{1}{s}\omega
    \label{eq:freq_tf}
\end{equation}
Linearization of the power expression yields \eqref{eq:power_lin}, in which $P$ depends only on the voltage in the local steady frame, allowing $K_v(s)$ to be written as~\eqref{eq:Kv}.
\begin{equation}
    P \approx \underbrace{(V_0 Y_{dq}+I_o)}_{:=G_P(s)} V_{dq}\,,\, V_0=[V_{d,0} \, V_{q,0}], \, I_0=[I_{d,0} \, I_{q,0}]
    \label{eq:power_lin}
\end{equation}
\begin{equation}
    K_v(s) = \frac{H_P(s)}{s}G_P(s)
    \label{eq:Kv}
\end{equation}
Since our focus is on the low-frequency behavior, we consider the DC gain of the admittance at $s=j0$. The following proposition summarizes the key result.
\begin{prop}
    Assume $V_{q,0}=0$, $V_{d,0}\neq0$, and $K_v(s)V_0^e$ is not identically zero. The converter admittance matrix \eqref{eq:frame_embedding} with $K_v(s)$ as in \eqref{eq:Kv} evaluated at $s=0$ is equal to:
\begin{equation}
    Y_{DQ}(0)=\begin{bmatrix}
        -\frac{I_{d,0}}{V_{d,0}} & -\frac{I_{q,0}}{V_{d,0}} \\
        \beta & \frac{I_{d,0}}{V_{d,0}}
    \end{bmatrix}
    \label{eq:Y_DQ_0}
\end{equation}
for some $\beta$. Moreover, $Y_{DQ}(0)$ is not sectorial.
\label{prop:Y_DQ_0}
\end{prop}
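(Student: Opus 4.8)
The plan is to treat $Y_{DQ}(0)$ as the limit $\lim_{s\to 0}Y_{DQ}(s)$, since the integrator $1/s$ in \eqref{eq:Kv} forces the row vector $K_v(s)$ to diverge as $s\to 0$. The essential structural fact I would exploit is that both $V_0^eK_v(s)$ and $I_0^eK_v(s)$ are rank-one: they are outer products of the column vectors $V_0^e,I_0^e$ with the row vector $K_v(s)=\frac{H_P(s)}{s}G_P(s)$. Writing $k(s)=\frac{H_P(s)}{s}$ and the scalar $\gamma(s)=G_P(s)V_0^e$, I would apply the Sherman--Morrison identity to obtain
\begin{equation*}
    (I+V_0^eK_v(s))^{-1}=I-\frac{k(s)\,V_0^eG_P(s)}{1+k(s)\gamma(s)}.
\end{equation*}
The hypothesis that $K_v(s)V_0^e=k(s)\gamma(s)$ is not identically zero ensures $\gamma(0)\neq 0$, so the denominator is dominated by $k(s)\gamma(s)$ as $s\to 0$ and the inverse remains well-defined in the limit.

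Next I would multiply out $(Y_{dq}(s)+k\,I_0^eG_P(s))(I+V_0^eK_v(s))^{-1}$ and track the terms carrying the divergent factor $k(s)$. Using the rank-one identities $G_PV_0^e=\gamma$ and hence $(I_0^eG_P)V_0^eG_P=\gamma\,I_0^eG_P$, the two $k$-proportional contributions combine into $\frac{k}{1+k\gamma}\,I_0^eG_P$, whose limit is finite; the cross term contributes $-\frac{1}{\gamma(0)}Y_{dq}(0)V_0^eG_P(0)$. The divergent parts therefore cancel and I arrive at the closed form
\begin{equation*}
    Y_{DQ}(0)=Y_{dq}(0)+\frac{\bigl(I_0^e-Y_{dq}(0)V_0^e\bigr)G_P(0)}{\gamma(0)}.
\end{equation*}

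I would then substitute $V_{q,0}=0$, which gives $V_0^e=[0,\,V_{d,0}]^T$, $V_0=[V_{d,0},\,0]$, and $G_P(0)=[V_{d,0}Y_{11}+I_{d,0},\;V_{d,0}Y_{12}+I_{q,0}]$ with $Y_{ij}$ the entries of $Y_{dq}(0)$. A short computation shows $\gamma(0)=V_{d,0}(V_{d,0}Y_{12}+I_{q,0})$ and, crucially, that the first component of $I_0^e-Y_{dq}(0)V_0^e$ equals $-(V_{d,0}Y_{12}+I_{q,0})$, the very scalar appearing in $\gamma(0)$. These cancellations make the top row and the $(2,2)$ entry collapse to values independent of $Y_{dq}(0)$, namely $-I_{d,0}/V_{d,0}$, $-I_{q,0}/V_{d,0}$, and $I_{d,0}/V_{d,0}$, reproducing \eqref{eq:Y_DQ_0}; the $(2,1)$ entry retains an explicit dependence on the internal dynamics and is left as the free parameter $\beta$.

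For non-sectoriality I would argue directly from the structure of \eqref{eq:Y_DQ_0}: its diagonal entries are $\pm I_{d,0}/V_{d,0}$, so $\operatorname{tr}Y_{DQ}(0)=0$. By the elliptical range theorem, the numerical range of a $2\times 2$ matrix is an elliptical disk centered at $\tfrac{1}{2}\operatorname{tr}$, hence centered at the origin here. Since the center of an ellipse always lies inside the closed elliptical disk, $0\in W(Y_{DQ}(0))$, which violates the requirement $0\notin W(A)$ for sectoriality. I expect the main obstacle to be the singular limit in the first two steps: verifying that the $1/s$ blow-up in both factors cancels exactly requires careful bookkeeping of the rank-one cancellations, whereas the sectoriality conclusion is immediate once the trace-zero structure is observed.
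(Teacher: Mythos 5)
Your proposal is correct and, for the core computation, follows the same route as the paper: the Sherman--Morrison inversion of $I+V_0^eK_v(s)$ (the paper's \eqref{eq:A1}--\eqref{eq:A2}) and the observation that the $1/s$ divergence of $K_v$ cancels in the limit, yielding exactly the paper's formula \eqref{eq:A3}, $Y_{DQ}(0)=Y_{dq}(0)+(I_0^e-Y_{dq}(0)V_0^e)G_P(0)/\gamma(0)$. Where you diverge is in how the entries of \eqref{eq:Y_DQ_0} are extracted and how non-sectoriality is concluded, and in both places your version is arguably tighter. The paper gets the second column by post-multiplying \eqref{eq:A3} by $V_0^e$ and omits the $(1,1)$ entry ``for brevity''; you instead expand the rank-one correction entrywise and exhibit the cancellation of $(V_{d,0}Y_{12}+I_{q,0})$ explicitly, which (i) fills the gap the paper leaves open and (ii) resolves a sign slip in the appendix --- the paper writes $Y_{DQ}(0)V_0^e=-I_0^e$, whereas its own formula \eqref{eq:A3} gives $+I_0^e$, and it is the latter that is consistent with the second column of \eqref{eq:Y_DQ_0} and with your computation. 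For non-sectoriality, the paper argues via a case analysis on whether the principal axis of the elliptical range is real or imaginary; you simply note that the elliptical disk is centered at $\tfrac12\operatorname{tr}Y_{DQ}(0)=0$ and the center always lies in the closed disk (including the degenerate segment case), which is cleaner and avoids the case split. One shared imprecision to flag: ``$K_v(s)V_0^e$ not identically zero'' guarantees $\gamma(s)\not\equiv 0$ but not $\gamma(0)\neq 0$; both your argument and the paper's division by $G_P(0)V_0^e$ in \eqref{eq:A3} tacitly need $H_P(0)\,G_P(0)V_0^e\neq 0$, so you should state that as the working hypothesis rather than derive it from the stated one.
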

\begin{proof}
    The derivation of \eqref{eq:Y_DQ_0} is given in Appendix \ref{apx:Prop_Y_DQ_0}. 
    Since the trace is null, then the eigenvalues' sum is zero. The numerical range of a $2\times2$ matrix is an ellipse with the two eigenvalues as foci. Depending on the values of $\beta$ and $I_{q,0}$, the ellipse has as principal axis either the real axis or the imaginary axis. In both cases, the origin belongs to the numerical range, making the matrix not sectorial.
\end{proof}

The assumption $V_{q,0}=0$ is without loss of generality since a rotation matrix can always be chosen so that this holds. Furthermore, depending on the converter operating point and the parameters, not only the admittance is non-sectorial but also non-quasi-sectorial.
This result demonstrates that, regardless of the control implementation, sectoriality cannot be achieved at 0 Hz and will generally not be satisfied at low frequencies. The root cause lies in the fact that a GFM converter behaves as a constant power load/source. Even when frequency droop is implemented (so that power varies with frequency), from the admittance perspective (voltage-to-current relation) the converter still behaves as a constant power device.
Furthermore, the above analysis shows that the converter is not passive\footnote{Here, “passive” is used in the loose frequency-wise sense common in the power electronics literature, referring to positive realness of $Y(j\omega)$.} at low frequency. This finding is consistent with previous work \cite{deyAnalysisPassivityCharacteristics2021}, but is more general in scope as it does not depend on a specific control implementation or parameter values.

\begin{rem}
    A similar analysis can be carried out for grid-following (GFL) converters. For instance, with an SRF-PLL synchronization loop, 
    $K_v(s)$ takes the form $\frac{H_{PLL}(s)}{s}[0,1]$.
    For a typical PI-based active power control, the resulting admittance has the same form as in \eqref{eq:Y_DQ_0}. 
    Therefore, the same considerations regarding low-frequency non-sectoriality and lack of passivity apply to GFL converters as well.
\end{rem}
\section{Proposed Approach}
\label{sec:propo_approch}
Following the rationale in \cite{deyPassivityBasedDecentralizedCriteria2023} and \cite{chenExtendedFrequencyDomainPassivity2025a}, we show that sectoriality is not an intrinsic property of the system, but rather depends on the specific selection of inputs and outputs. Consequently, an alternative choice of variables may enable the system to exhibit sectoriality.

\subsection{Power-Polar coordinates frame}
A particularly suitable choice for GFM converters is to represent voltages in power-polar coordinates, i.e., using voltage magnitude and phase as inputs, and active and reactive power as outputs. As demonstrated in \cite{deyAnalysisPassivityCharacteristics2021}, this representation yields a passive mapping for GFM converters in the low-frequency range, and therefore a sectorial system. For each converter $i$, the associated linearized change of variables is given by \eqref{eq:change_of_coord_1}–\eqref{eq:change_of_coord_2}, where the transformation matrices $\mathcal{E}_i$, $\mathcal{F}_i$, and $\mathcal{C}_i$ depend on the voltage and current operating point.
\begin{equation}
    \begin{bmatrix}
        \phi_i \\ |V|_i 
    \end{bmatrix} =
    \underbrace{\begin{bmatrix}
        -V^i_{q,0} & \frac{V^i_{d,0}}{\sqrt{(V^i_{d,0})^2+(V^i_{q,0})^2}} \\
        V^i_{q,0} & \frac{V^i_{q,0}}{\sqrt{(V^i_{d,0})^2+(V^i_{q,0})^2}}
    \end{bmatrix}}_{:=\mathcal{F}_i^{-1}}
    \begin{bmatrix}
        V_{D,i} \\ V_{Q,i} 
    \end{bmatrix}
    \label{eq:change_of_coord_1}
\end{equation}

\begin{equation}
    \begin{bmatrix}
        P_i \\ Q_i
    \end{bmatrix} = \underbrace{\begin{bmatrix}
        V^i_{d,0} & V^i_{q,0}  \\
        V^i_{q,0} & -V^i_{d,0}
    \end{bmatrix}}_{:=\mathcal{E}_i} \begin{bmatrix}
        I_{D,i} \\ I_{Q,i} 
    \end{bmatrix} + 
    \underbrace{\begin{bmatrix}
        I^i_{d,0} & V^i_{q,0}  \\
        -I^i_{q,0} & I^i_{d,0}
    \end{bmatrix}}_{:=\mathcal{C}_i}
    \begin{bmatrix}
        V_{D,i} \\ V_{Q,i} 
    \end{bmatrix}
    \label{eq:change_of_coord_2}
\end{equation}

With these definitions, the converter’s new input–output relation becomes as in \eqref{eq:pp_trasnf} and similar the transformation is applied to the network, yielding \eqref{eq:pp_trasnf_net}.
\begin{equation}
    J_{C,i}(s) :=(\mathcal{E}_iY_{C_i}(s)+\mathcal{C}_i)\mathcal{F}_i
    \label{eq:pp_trasnf}
\end{equation}
\begin{equation}
    \mathbf{J}_{net}(s) :=(\boldsymbol{\mathcal{E}}\mathbf{Y}_{net}(s)-\boldsymbol{\mathcal{C}})\boldsymbol{\mathcal{F}}
    \label{eq:pp_trasnf_net}
\end{equation}
$\boldsymbol{\mathcal{E}}$, $\boldsymbol{\mathcal{C}}$ and $\boldsymbol{\mathcal{F}}$ are the block-diagonal matrices build from $\mathcal{E}_i$, $\mathcal{C}_i$ and $\mathcal{F}_i$, respectively. The minus sign in $\mathcal{C}$ comes from the fact that the current in the network flows in the opposite direction to that assumed in the converter. We denote $\mathbf{J}_C$ the block-diagonal matrix build from $J_{C_i}$ as in \eqref{eq:block_diagonal}.

This change of variables can be interpreted as a loop transformation, as illustrated in Fig. \ref{fig:loop_shaping}. 
Assuming the transformation is well-defined, input/output stability of the transformed system and the original system are equivalent. Indeed, the stability is determined by the zeros of $\det(\mathbf{Y}_{net}(s)+\mathbf{Y}_C(s))$. Since $\det(\mathbf{J}_C(s) +\mathbf{J}_{net}) = \det(\boldsymbol{\mathcal{E}}(\mathbf{Y}_{net}(s)+\mathbf{Y}_{C}(s))\boldsymbol{\mathcal{F}})$, the zeros of the transformed system and the original are identical. Moreover, if there are no right plane pole/zero cancellations, then
 input–output stability implies internal stability, as shown for instance in \cite{häberle2025decentralizedparametricstabilitycertificates}. Theorem \ref{thm:dec_mixed_small_gain_phase} can be applied directly to the transformed system without loss of generality.

\begin{figure}
    \centering
    \includegraphics[width=\linewidth]{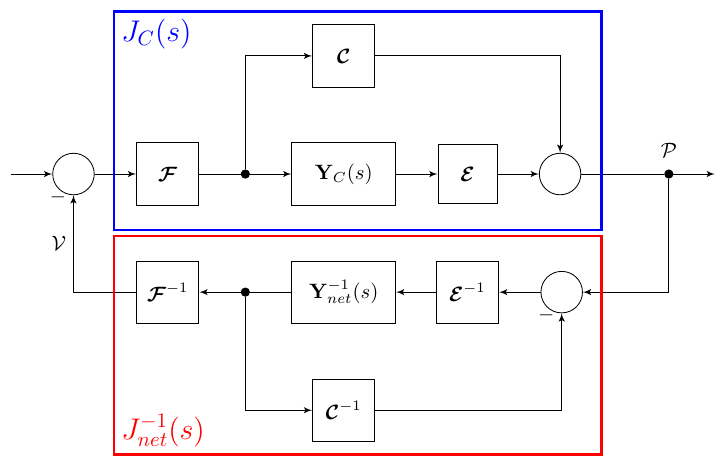}
    \caption{Proposed transformation as loop-shaping transformation}
    \label{fig:loop_shaping}
\end{figure}

\begin{rem}
    The proposed transformation 
    combines the post-multiplication and the shaping functions from \cite{woolcockMixedGainPhase2023, chenExtendedFrequencyDomainPassivity2025a, deyPassivityBasedDecentralizedCriteria2023}. In \eqref{eq:pp_trasnf}, both pre- and post-multiplication are performed, along with a translation term $\mathcal{C}$.
\end{rem}
Taking in consideration the GFM model introduced in \eqref{eq:frame_embedding} and applying the transformation, a straightforward computation shows that $J_C(s)$ evaluated at 0 is equal to \eqref{eq:J_C_0}, for some scalar $\gamma$. Indeed, in this formulation a phase jump does not produce any steady-state power deviation, while a deviation in voltage magnitude results only in a change in reactive power. 
It is a diagonal matrix with eigenvalues $0$ and $\gamma$, and its numerical range is a straight segment connecting this two eigenvalues. 
\begin{equation}
    J_C(0)=\begin{bmatrix}
        0 & 0 \\
        0 & \gamma \\
    \end{bmatrix}
    \label{eq:J_C_0}
\end{equation}

In this frame, the converter admittance at zero frequency is quasi-sectorial with a well-defined phase. While this may not extend across the low-frequency range, it remains a promising indication.
Nevertheless, for the small-phase criterion to be applicable, sectoriality must hold across the entire frequency range for both $\mathbf{J}_C(s)$, and $\mathbf{J}^{-1}_{net}(s)$. 
While the transformation can resolve the low-frequency non-sectoriality, it does not guarantee that sectoriality will persist at higher frequencies.

To illustrate this point, the same analysis as before is repeated for a GFM connected to an infinite bus. The small-phase criteria is shown in Figure \ref{fig:pp_frame_cond}. The converter exhibits sectorial behavior up to approximately $64$ Hz, within which the small-phase condition is satisfied. Beyond this frequency, neither the converter nor the network remain sectorial, rendering the criterion inapplicable. This highlights that, while the transformation addresses the dominant low-frequency limitation, it does not fully eliminate the applicability restrictions of the small-phase approach.

\begin{rem}
    Although $\mathbf{J}_C(s)$ inherits stability from $\mathbf{Y}_C(s)$, the same cannot be guaranteed for $\mathbf{J}^{-1}_{net}(s)$. The associated transformation can be interpreted as a feedback operation, which may potentially destabilize the system. Consequently, prior to applying Theorem \ref{thm:dec_mixed_small_gain_phase}, the stability of $\mathbf{J}^{-1}_{net}(s)$ must be verified.
\end{rem}

\begin{figure}
    \centering
    \includegraphics[width=\linewidth]{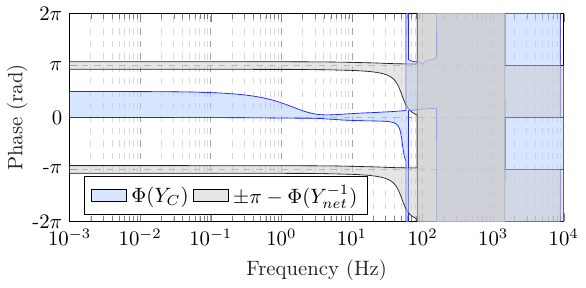}
    \caption{Small phase condition for a GFM converter in Power-Polar frame}
    \label{fig:pp_frame_cond}
\end{figure}

\subsection{Unified Approach}
As shown in the previous section, neither the criteria in the standard rectangular admittance frame nor those in the power–polar frame are satisfied across the entire frequency range.
In the rectangular frame, the phase remains well-defined and the small-phase condition is fulfilled at high frequencies (e.g., $f>0.7$ Hz). In contrast, for the loop-shaped system in the polar frame, the phase is well-defined and the small-phase condition is satisfied at low frequencies (e.g., $f<64$ Hz).

An intuitive approach might be to combine the two frames: use the rectangular frame phase at high frequencies and the polar-frame phase at low frequencies, and declare the system stable if the small-phase criterion is satisfied in at least one frame at each frequency. However, it is possible to construct a counterexample demonstrating that stability cannot be inferred simply by merging frequency-wise results from different frames.
The fundamental reason lies in the fact that different transformations bound different quantities:
the eigenvalues of $\mathbf{J}_C(j\omega)\mathbf{J}^{-1}_{net}(j\omega)$ are, in general, not the same as those of $\mathbf{Y}_C(j\omega)\mathbf{Y}^{-1}_{net}(j\omega)$ if $\mathcal{C}$ is different from the zero matrix.


To merge these two frames into a single system, we consider the shaping matrices $\mathcal{E}_i$, $\mathcal{F}_i$, and $\mathcal{C}_i$ to be frequency-dependent matrices, as expressed in \eqref{eq:generic_trasnf}.

\begin{equation}
\begin{aligned}
    &\tilde{J}_{C_i}(s) =(\mathcal{E}_i(s)Y_{C_i}(s)+\mathcal{C}_i(s))\mathcal{F}_i(s) \\
    &\tilde{\mathbf{J}}_{net}(s) =(\boldsymbol{\mathcal{E}}(s)\mathbf{Y}_{net}(s)-\boldsymbol{\mathcal{C}}(s))\boldsymbol{\mathcal{F}}(s)
    \label{eq:generic_trasnf}
\end{aligned}
\end{equation}
These matrices must be selected so that each $\tilde{J}_{C_i}(s)$ and $\tilde{\mathbf{J}}^{-1}_{net}(s)$ are stable.
Moreover, if the shaping matrices $\boldsymbol{\mathcal{E}}(s)$, $\boldsymbol{\mathcal{C}}(s)$ and $\boldsymbol{\mathcal{F}}(s)$ are well-posed, the stability of the loop-shaped system composed of $\tilde{\mathbf{J}}_{c}$ and $\tilde{\mathbf{J}}_{net}$ is equal to the stability of the original system composed by $\mathbf{Y}_{c}$ and $\mathbf{Y}_{net}$. 
This formulation provides significant flexibility in choosing the transformation matrices, but identifying ones that simultaneously ensure open-loop stability and sectoriality over the full frequency range is nontrivial.


Ideally, at low frequencies the matrices should correspond to those in \eqref{eq:change_of_coord_1}-\eqref{eq:change_of_coord_2} (denoted from now on as $\mathcal{E}_{J,i}$, $\mathcal{F}_{J,i}$ and $\mathcal{C}_{J,i}$), while at high frequencies they should revert to the rectangular frame (i.e., $\mathcal{E}=\mathcal{F}=I$ and $\mathcal{C}=0$, denoted from now on as $\mathcal{E}_{Y,i}$, $\mathcal{F}_{Y,i}$ and $\mathcal{C}_{Y,i}$). A continuous transition between these two regimes is necessary. A trivial choice for $\mathcal{E}_i(s)$, and similar for $\mathcal{F}_i(s)$ and $\mathcal{C}_i(s)$, can be as in \eqref{eq:freq_dep_tf} using an low/high-pass filtering approach. 
The filters transfer function are given as: $H_{LPF}(s)=w_c/(s+w_c)$ and $H_{HPF}(s)=1-H_{LPF}(s)$, with $w_c$ the filter cut-off frequency. 
\begin{equation}
    \mathcal{E}_i(s) = H_{LPF}(s)\mathcal{E}_{J,i}+H_{HPF}(s)\mathcal{E}_{Y,i}
    \label{eq:freq_dep_tf}
\end{equation}
This transformation proves in general inadequate since sectoriality is lost near the filter cut-off frequency $w_c$. 
To ensure sectoriality while obtaining tight phase bounds, we propose instead the transformation in~\eqref{eq:freq_dep_tf_better}, with an additional degree of freedom $W_i$. In this formulation, frequency-dependent filtering is applied exclusively to $\mathcal{C}$ and $\mathcal{F}(s)$, while $\mathcal{E}(s)$ remains constant.
\begin{equation}
\begin{aligned}
    &\mathcal{E}_i(s) = \mathcal{E}_{J,i} \\
    &\mathcal{C}_i(s) = H_{LPF}(s)\mathcal{C}_{J,i}\\
    &\mathcal{F}_i(s) = H_{LPF}(s)\mathcal{F}_{J,i}+H_{HPF}(s)W_i(s)\mathcal{E}_{J,i}^{-1}
\end{aligned}
\label{eq:freq_dep_tf_better}
\end{equation}
From practical observation, this transformation compared to \eqref{eq:freq_dep_tf} provides smooth transitions between the two frames, and sectoriality is easier to obtain.
For instance, setting $W_i(s)$ equal to the identity matrix
let us recover the rectangular frame at high frequencies, since $J_{C_i}\approx\mathcal{E}_{J,i}Y_{C_i}\mathcal{E}_{J,i}^{-1}$ (and a similarly transformation does not change the numerical range). $W_i(s)$ is an additional weighted matrix  introducing further flexibility to, for example, transfer known dynamics from $Y_{C_i}(s)$ to $\mathbf{Y}_{net}(s)$, eventually allowing better sectorial behavior and stricter phase bounds. This concept is clarified in the next section with an example. Nonetheless, other transformations may ensure sectoriality and tight phase bounds; finding out an optimal loop-shaping transformation is still an open point.
\section{Example 1 - Infinite Bus}
\label{sec:ex_1}
In this section, we assess how the proposed methods can be applied to study the stability of a converter connected to the infinite bus system described earlier. Moreover, we address the conservativeness limitation showed in Section \ref{sec:lim} and how this can be resolved with a proper choice of the weighting matrix $W_i(s)$.

\subsection{Virtual Admittance compensation}
Figure \ref{fig:GFM_gain_phase_test} shows that, even when both systems are sectorial, the small-phase eigenvalue bounds are highly conservative for frequencies below $50$ Hz. This conservatism arises from the fact that both of $Y_C$ and $Y^{-1}_{net}$ exhibit large phases, and, recalling \eqref{eq:phase_bound}, the small-phase bound is derived under a worst-case assumption, pairing maxima with maxima and minima with minima. However, favorable phase interactions may occur in the product $Y_CY^{-1}_{net}$, leading to partial compensation and a reduction of its eigenvalues' phase. This phenomenon is evident in Figure~\ref{fig:phase-bounds_Yv}, where the small-phase bounds $\Phi(Y_C)+\Phi(Y_{net}^{-1})$ clearly overestimate the actual eigenvalues' phase $\angle\lambda(Y_CY_{net}^{-1})$.
\begin{figure}
    \centering
    \includegraphics[width=\linewidth]{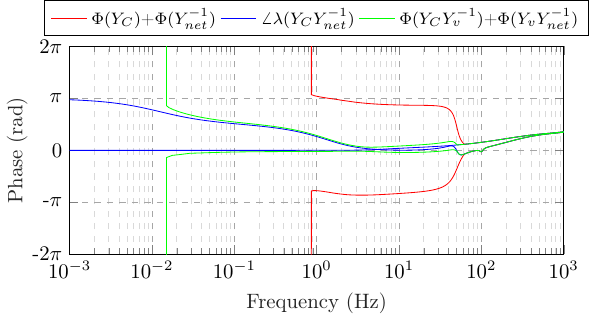}
    \caption{Small-phase bounds without and with virtual admittance compensation.}
    \label{fig:phase-bounds_Yv}
\end{figure}

A tighter bound can be obtained by reducing the large phase spread of the numerical range, and therefore limiting the extreme combinations of phases. Since the converter implements a virtual admittance control $Y_v(s)$, and its behavior around $50$ Hz is mainly determined by the virtual admittance, we can post-multiply $Y_C(s)$ by the inverse of $Y_{v}(s)$, such that $Y_{C}(s)Y_{v}(s)^{-1}\approx I$ around those frequencies. The network admittance is transformed accordingly to $Y_{net}(s)Y_{v}(s)^{-1}$.
From another viewpoint, a known dynamic behaviour is shifted from the converter to the network, thereby avoiding the conservatism. In~\eqref{eq:freq_dep_tf_better}, this correspond in the the selection $W_i(s)=Y_v(s)^{-1}$.

Figure~\ref{fig:phase-bounds_Yv} illustrates the effect of this compensation. It can be seen that applying the proposed virtual admittance filtering leads to noticeably improved bounds.

At first glance, this approach may appear to undermine the decentralized nature of the certificate. However, in practice, exact knowledge of the virtual admittance values (that is, the behavior around $50$Hz) is not required, and approximate values are sufficient, as shown afterwards. In a decentralized setting, these values can either be selected apriori as fixed parameters or assigned differently across units by the stability certifier. In any case, proposed grid codes for grid-forming converters already define fixed ranges for the effective impedance~\cite{entsoe}. 
\begin{rem}
Figure~\ref{fig:phase-bounds_Yv} shows one eigenvalue approaches $\pi$ as the frequency tends to zero. In this region, the small-gain condition does not ensure $|\lambda|<1$, rendering stability assessment inconclusive. This motivates the shift introduced by $\mathcal{C}$, which relocates the eigenvalues of the open-loop transfer function in a more suitable region where small-phase allows to certify stability.
\end{rem}

\subsection{Full transformation}
In Figure~\ref{fig:phase-dep-frame}, the small-phase criteria is shown with the frequency-depended transformation \eqref{eq:generic_trasnf}, with matrices defined as in~\eqref{eq:freq_dep_tf_better}. The cut-off frequency of the filters is set to $0.5$ Hz.  Compared to the previous results, we can obtain sectoriality for the full frequency range and the closed-loop system stability is guaranteed by the small-phase conditions alone.
This means that the stability of this GFM converter does not depend on the grid strength, meant as its gain, and stability is preserved under both weak and strong conditions. In this sense, the small-phase condition provides a grid-strength-independent stability guarantee. However, variations in the network characteristics, such as changes in the $R/X$ ratio or loading, may alter the network phase. Such variations can reduce the phase margin and potentially drive the converter toward instability.

\begin{figure}
    \centering \includegraphics[width=\linewidth]{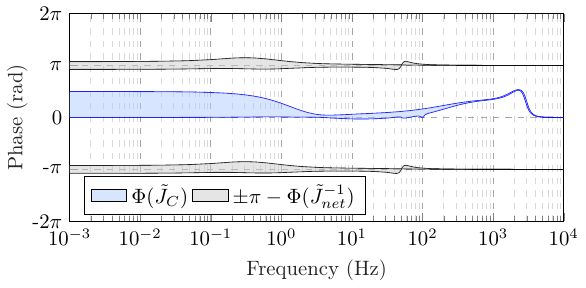}
    \caption{Small phase condition of a GFM converter with frequency-dependent transformation \eqref{eq:generic_trasnf}, \eqref{eq:freq_dep_tf_better}.}
    \label{fig:phase-dep-frame}
\end{figure}

\subsection{Unstable case}
To further validate the method, we examine the sufficient condition of the theorem: if the closed-loop system is unstable, then the small-phase and the small-gain condition must be simultaneously violated at some frequency. In practice, destabilizing a grid-forming converter is not straightforward. Following \cite{de2012automatic}, we introduce a purely resistive load at the PCC and increase the infinite bus impedance. By switching the control mode to constant-Q and reducing the VSM damping of the GFM from $0.5$ to $0.05$, the closed-loop becomes unstable, exhibiting sustained oscillations at 1.2 Hz. Figure \ref{fig:GFM_Unstable} illustrates the mixed gain/phase test, considering the proposed transformation. 
Enabling the reactive power control reduces the gain at low frequencies, leading to the small-gain condition to be satisfied.
However, while in the stable case ($D=0.5$) the small-phase condition is satisfied, in the unstable case ($D=0.05$), the converter phase increases substantially, and together with the network phase causes the phase condition to be violated, while the gain condition is also not satisfied around the unstable oscillation frequency.
\begin{figure}
    \centering
    \begin{subfigure}{\columnwidth}
        \centering
        \includegraphics[width=\linewidth]{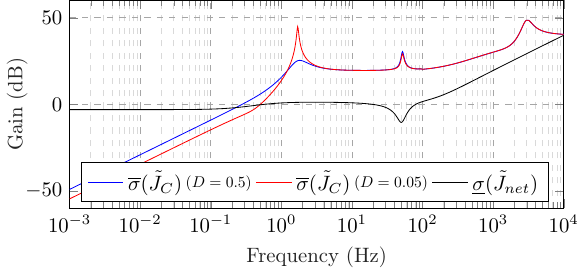}
    \end{subfigure}
    \begin{subfigure}{\columnwidth}
        \centering
        \includegraphics[width=\linewidth]{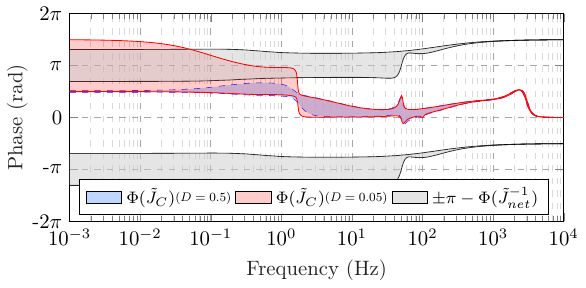}
    \end{subfigure}
    \caption{GFM Infinite Bus - Stable ($D=0.5$) vs Unstable ($D=0.05$).}
    \label{fig:GFM_Unstable}
\end{figure}

\section{Example 2 - IEEE 14 Bus System}
\label{sec:ex_2}
\begin{figure}
    \centering
    \includegraphics[width=\linewidth]{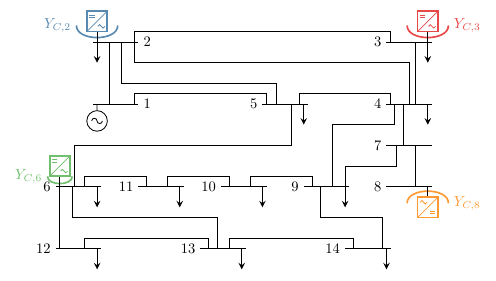}
    \caption{IEEE 14 Bus System diagram}
    \label{fig:ieee14_diagram}
\end{figure}
To extend and show the scalability of our approach, we consider the modified IEEE 14 Bus System shown in Figure \ref{fig:ieee14_diagram}, which parameters are given in \cite{RepoZenodo}. Four GFM converters are connected at the buses 2, 3, 6, and 8.
Similarly to the infinite bus example, to build an unstable case, we select a low damping value for converter 8, connect purely resistive loads, and increase the line 7-8 impedance. In this setup, the system is unstable with an unstable oscillation at 0.54 Hz driven by converter 8.
In Figure \ref{fig:IEEE14_criteria}, the decentralized conditions are shown using the proposed transformation and virtual admittance compensation. While the converters 2, 3, and 6 satisfy the small-phase condition, converter 8 does not. In particular, below 1.64 Hz, its phase overlaps the network phase. This does not imply the instability of the system; however, knowing that the system is unstable, it identifies the responsible converter, and by appropriately retuning its control parameters, stability can potentially be restored. For example, if we apply the same tuning of converter 6, which satisfy the small-phase condition, to converter 8, then stability is recovered.

In this example, each converter implements a different $R/X$ ratio in its virtual admittance, as reported in Table~\ref{tab:virt_adm}, while a fixed value ($R_v=0.01$, $X_v=0.1$) is used for the virtual admittance compensation for all converters. This demonstrates that, even when the exact virtual admittance is unknown, a reasonable estimate can still be used to obtain adequate phase bounds and, consequently, an effective stability validation with the proposed approach.
\begin{figure}
    \centering
    \begin{subfigure}{\columnwidth}
        \centering
        \includegraphics[width=\linewidth]{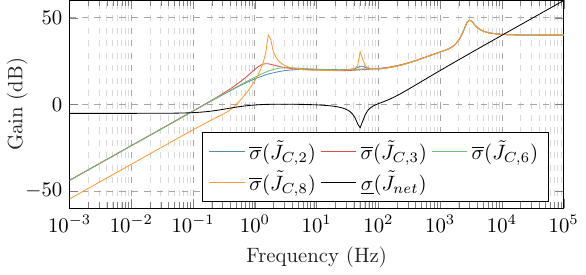}
    \end{subfigure}
    \begin{subfigure}{\columnwidth}
        \centering
        \includegraphics[width=\linewidth]{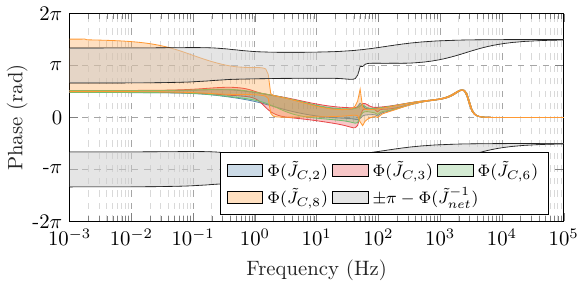}
    \end{subfigure}
    \caption{Stability analysis of the 14 Bus System}
    \label{fig:IEEE14_criteria}
\end{figure}

\begin{table}
\centering
\caption{Virtual admittance setting in 14 Bus System}
\label{tab:virt_adm}
\begin{tabular}{ccc}
GFM Bus & $R_v$ (p.u) & $X_v$ (p.u) \\
\hline
$2$ & $0.02$ & $0.1$ \\
$3$ & $0.05$ & $0.1$ \\
$6$ & $0.03$ & $0.1$ \\
$8$ & $0.01$ & $0.1$ \\
\hline
\end{tabular}
\end{table}
\section{Conclusion}
\label{sec:conc}

This work revisited decentralized small gain–phase conditions for power systems and identified their inherent limitations when applied to grid-forming converters. To overcome this, a generalized loop-shaping transformation was introduced, combining the rectangular and the polar admittance frames. Not only this addresses the lack of sectoriality at low frequencies in converters, but also renders tighter bounds for the phase behavior across the whole frequency range.

The method was demonstrated on a grid-forming converter connected to an infinite bus, where virtual admittance filtering further reduced conservativeness, and on the IEEE 14-bus system, where it successfully identified unstable devices. 

While the results are promising, challenges remain, particularly when studying systems that include not only grid-forming converters but also other devices, such as synchronous machines and grid-following converters. Synchronous machines have similar dynamics to the GFM converter considered here and we expect the proposed approach to work. However, the impact of auxiliary control loops (e.g. exciters, automatic voltage regulators, power system stabilizers) requires further verification. 
For grid-following converters, which exhibit distinct dynamics, the small-gain condition plays a more critical role, and the benefits of the proposed transformation are limited. 
To obtain a less conservative gain condition for GFL units, it would be beneficial to shift the high gain of the GFMs toward the network, thereby enhancing its strength.





Future work will focus on systematic selection of transformation matrices and on reducing conservativeness by embedding network topology information into the method.

\bibliographystyle{IEEEtran}
\bibliography{DecCondition_bib}

\appendices
\section{GFM Control structure}
In this paper, we consider the GFM control structure represented in Figure \ref{fig:gfm_diag}. The controller are implemented in the stationary reference frame $\alpha\beta$. It consists of a Proportional-Resonant $PR(s)$ current controller, given in~\eqref{eq:pr_tf}, and a Virtual Admittance $Y_v(s)$ voltage controller, given in~\eqref{eq:yv_tf}, with virtual resistance $R_v$ and virtual inductance $L_v$.
\begin{equation}
    PR(s) = K_p+\frac{K_i}{s^2+(2\pi50)^2}
    \label{eq:pr_tf}
\end{equation}
\begin{equation}
    Y_v(s)=\frac{1}{sL_v+R_v}, \quad L_v=\frac{X_v}{2\pi50}
    \label{eq:yv_tf}
\end{equation}
The synchronization control is implemented through a Virtual Synchronous Machine (VSM) with inertia $H$ and damping $D$. A Proportional-Integral (PI) controller regulates the reactive power changing the reference voltage magnitude. Unless explicitly stated, the reactive power controller is disabled, and a constant voltage magnitude reference is used. Parameters and further details are available in \cite{RepoZenodo}.

\begin{figure}
    \centering
    \includegraphics[width=0.9\columnwidth]{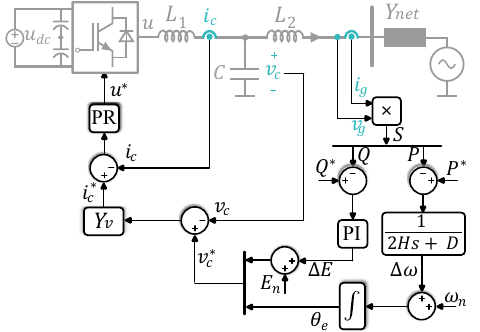}
    \caption{GFM control Structure}
    \label{fig:gfm_diag}
    \vspace{-1em}
\end{figure}

\label{apx:GFM_control}

\section{Proof of Proposition \ref{prop:Y_DQ_0}}
\label{apx:Prop_Y_DQ_0}
The expression of $Y_{DQ}(s)$ in~\eqref{eq:frame_embedding} can be rewritten as in \eqref{eq:A1}, and further simplify as in \eqref{eq:A2}.
\begin{align}
    Y_{DQ}(s)&=(Y_{dq}(s)+I_0^eK_v(s))\left(I-\frac{V_0^eK_v(s)}{1+K_v(s)V_0^e} \right)\label{eq:A1}\\
    &=Y_{dq}(s)-(Y_{dq}(s)V_0^e-I_0^e)\frac{K_v(s)}{1+K_v(s)V_0^e}\label{eq:A2}
\end{align}
Considering $K_v(s)=\frac{H_P(s)}{s}G_P(s)$ and taking the limit for $s$ tends to zero we obtain~\eqref{eq:A3}.
\begin{equation}
    Y_{DQ}(0)= Y_{dq}(0)-(Y_{dq}(0)V_0^e-I_0^e)\frac{G_P(0)}{G_P(0)V_0^e}
    \label{eq:A3}
\end{equation}
By post-multiplying by $V_0^e$, we obtain $Y_{DQ}(0)V_0^e=-I_0^e$, i.e. the second column of $Y_{DQ}(0)$ times $V_{d,0}$ is equal to $-I_0^e$, as given in~\eqref{eq:Y_DQ_0}.
Through straightforward but lengthy algebraic manipulations (omitted here for brevity), it can be shown that the (1,1) entry of $Y_{DQ}(0)$ also matches the corresponding expression in \eqref{eq:Y_DQ_0}.

\end{document}